   \definecolor{BLACK}{gray}{0}
   \definecolor{WHITE}{gray}{1}
   \definecolor{RED}{rgb}{1,0,0}
   \definecolor{GREEN}{rgb}{0,1,0}
   \definecolor{BLUE}{rgb}{0,0,1}
   \definecolor{CYAN}{cmyk}{1,0,0,0}
   \definecolor{MAGENTA}{cmyk}{0,1,0,0}
   \definecolor{YELLOW}{cmyk}{0,0,1,0}
\theoremstyle{plain}
\newtheorem{thm}{\protect\theoremname}
\providecommand{\theoremname}{Theorem}
\begin{document}

\title{Quantum Parrondo's games constructed by quantum random walks}

\author{Min Li}

\email{mickeylm@mail.ustc.edu.cn}

\author{Yong-Sheng Zhang}

\email{yshzhang@ustc.edu.cn}

\author{Guang-Can Guo}

\address{Key Laboratory of Quantum Information, University of Science and
Technology of China, CAS, Hefei, 230026, People's Republic of China}
\begin{abstract}
We construct a Parrondo's game using discrete time quantum walks.
Two lossing games are represented by two different coin operators.
By mixing the two coin operators $U_{A}(\alpha_{A},\beta_{A},\gamma_{A})$
and $U_{B}(\alpha_{B},\beta_{B},\gamma_{B})$, we may win the game.
Here we mix the two games in position instead of time. With a number
of selections of the parameters, we can win the game with sequences
ABB, ABBB, \emph{et al}. If we set $\beta_{A}=45^{\circ},\gamma_{A}=0,\alpha_{B}=0,\beta_{B}=88^{\circ}$,
we find the game 1\emph{ }with {\normalsize $U_{A}^{S}=U^{S}(-51^{\circ},45^{\circ},0)$,
$U_{B}^{S}=U^{S}(0,88^{\circ},-16^{\circ})$ will win and get the
most profit. }If we set $\alpha_{A}=0,\beta_{A}=45^{\circ},\alpha_{B}=0,\beta_{B}=88^{\circ}$
and{\normalsize{} the game 2 with $U_{A}^{S}=U^{S}(0,45^{\circ},-51^{\circ})$,
$U_{B}^{S}=U^{S}(0,88^{\circ},-67^{\circ})$, will win most. And }game
1\emph{ }{\normalsize is equivalent to the }game\emph{ }2\emph{ }with
the changes of sequences and steps. But at a large enough steps, the
game will loss at last.
\end{abstract}
\maketitle

\section{Introduction}

Parrondo's games present an apparently paradoxical situation where
individually losing games can be combined to win \cite{Harmer,Parrondo-2000,Harmer-2001}.
Parrondo's games have important applications in many physical and
biological systems. For example in control theory, the analogy of
Parrondo's games can be used to design a second order switched mode
circuit which is unstable in either mode but is stable when switched
\cite{Allison-2001}. Recentely, quantum version of Parrondo's games
was also introduced in \cite{Flitney-2002,Chandrashekar-2011,Ampadu-2011,Flitney-2003,Flitney-2012}. 

Quantum walks(QWs), as the quantum version of the classical random
walks, were first introduced in 1993 \cite{Aharonov} (For overviews,
see \cite{Kempe,Venegas-2012}). According to the time evolution,
QWs can be devided into discrete-time (DTQWs) and continuous-time
\cite{Farhi-1998} QWs (CTQWs). A number of quantum algorithms based
on QWs have already been proposed in \cite{Childs2003,Shenvi2003,Childs2002,Childs2004,Ambainis,Ambainis2004},
and QWs are found to be universal for quantum computation \cite{Childs2009,Lovett-2010}.

In this paper, we construct a Parrondo's games using QWs. Two models
fo Parrondo's game using QWs were introduced in \cite{Chandrashekar-2011,Flitney-2012},
but both of them are switched according to the time. Here we construct
the games with coin operators alternated depending on the position
in one step.

\section{Discrete time quantum walks on a line}

In this paper, we concern with the discrete-time quantum walks. To
be consistent, we adopt analogous definitions and notations as those
outlined in \cite{Travaglione}. The total Hilbert space is given
by $\mathcal{H}\equiv\mathcal{H}_{P}\otimes\mathcal{H}_{C}$, where
$\mathcal{H}_{P}$ is spanned by the orthonormal vectors $\left\{ \mid x\rangle:x\in\mathbb{Z}\right\} $
which represents the position of the walker and $\mathcal{H}_{C}$
is the Hilbert space of chirality, or ``coin'' states, spanned by
the orthonormal basis $\{\mid L\rangle,\mid R\rangle\}$.

Each step of the quantum walk can be split into two operations: the
flip of a coin and the position motion of the walker according to
the coin state.

Here, for simplicity, we choose a Hadamard coin as the normal quantum
walk's coin, so the coin operator can be written as 

\begin{equation}
\hat{H}\mid\downarrow\rangle=\frac{1}{\sqrt{2}}(\mid\downarrow\rangle+\mid\uparrow\rangle),\:\hat{H}=\frac{1}{\sqrt{2}}\left(\begin{array}{cc}
1 & 1\\
1 & -1
\end{array}\right).
\end{equation}

The position displacement operator is given by

\begin{equation}
\hat{S}=e^{i\hat{p}\hat{\sigma}_{z}}=\sum_{x}\hat{S}_{x},
\end{equation}
where $\hat{p}$ is the momentum operator, $\hat{\sigma}_{z}$ is
the Pauli-$z$ operator, 

\begin{equation}
\hat{S}_{x}=\mid x+1\rangle\langle x\mid\otimes\mid\uparrow_{x}\rangle\langle\uparrow_{x}\mid+\mid x-1\rangle\langle x\mid\otimes\mid\downarrow_{x}\rangle\langle\downarrow_{x}\mid.
\end{equation}

Therefore, the state of the walker after $N$ steps is given by

\begin{equation}
\begin{aligned}\mid\Psi_{N}\rangle & =\left[\hat{S}(\hat{I}_{P}\otimes\hat{H}_{C})\right]^{N}\mid\Psi_{0}\rangle\\
 & =\left[\sum_{x}\hat{S}_{x}(\hat{I}_{P}\otimes\hat{H}_{C})\right]^{N}\mid\Psi_{0}\rangle,
\end{aligned}
\label{eq:hadamardDQW-1}
\end{equation}
where $\mid\Psi_{0}\rangle$ is the initial state of the system. 

For generalized DTQWs, we use a U(2) matrix:
\begin{equation}
U_{\alpha,\beta,\gamma,\theta}=e^{i\theta}\left(\begin{array}{cc}
e^{i\alpha}\cos\beta, & -e^{-i\gamma}\sin\beta\\
e^{i\gamma}\sin\beta, & e^{-i\alpha}\cos\beta
\end{array}\right)\label{eq:U(2)}
\end{equation}
instead of the Hadamard operator in Eq. (\ref{eq:hadamardDQW-1}).
We can easily know the QWs using a SU(2) operator:
\begin{equation}
U_{\alpha,\beta,\gamma}^{S}=\left(\begin{array}{cc}
e^{i\alpha}\cos\beta, & -e^{-i\gamma}\sin\beta\\
e^{i\gamma}\sin\beta, & e^{-i\alpha}\cos\beta
\end{array}\right),\label{eq:SU(2)}
\end{equation}
 have the same properties as using a U(2) coin operator\cite{Li-average}.
In this paper we always use the SU(2) operator.

\section{Parrondo's game using QRW with position dependent on coin operator}

Parrondo's games arise in the following situation when we have two
games that are losing when played seperately, but the two games played
in combination will form an overall winning game. 

Here we present a scheme of a player playing games using DTQW. He
has two games A and B, but he do not play the two games alternated
according to the time, but the position \cite{Li-potential}. Alternating
the coin operation step-by-step has been discussed in \cite{Flitney-2012}.
Here we discuss Parrondo's games using QRWs with the coin operation
alternated with the position in one step.

\begin{figure}
\begin{centering}
\includegraphics[width=3in]{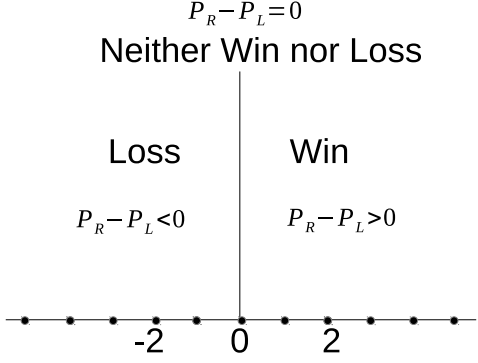}
\par\end{centering}

\caption{Pictorial illustration of the conditions for win or loss for QWs on
a line.}
\label{fig:win-loss}
\end{figure}

The game is constructed as follows:
\begin{itemize}
\item Both game A and B are represented by different quantum operators $U(\alpha_{A},\beta_{A},\gamma_{A})$
and $U(\alpha_{B},\beta_{B},\gamma_{B})$.
\item The state is in $\mid\Psi_{0}\rangle=\frac{1}{\sqrt{2}}\mid0\rangle(\mid\downarrow\rangle+i\mid\uparrow\rangle)$
initially.
\item Game \emph{A} and \emph{B} are played alternately in different positions
in one step, instead of step by step. i.e. game A is played on site
$x=nq$ and game B is played on site $x\neq nq$. The evolution operator
can be written as:
\begin{equation}
U=\sum_{x=nq,n\in Z}\hat{S}_{x}U(\alpha_{A},\beta_{A},\gamma_{A})+\sum_{x\neq nq,n\in Z}\hat{S}_{x}U(\alpha_{B},\beta_{B},\gamma_{B}),
\end{equation}
where $q$ is the period, $n$ is an integer, and the final state
after $N$ steps is given by
\begin{equation}
\mid\Psi_{N}\rangle=U^{N}\mid\Psi_{0}\rangle.
\end{equation}
For $q=3$, it means we play games with the sequence ABBABB on the
line.
\end{itemize}
As denoted in Fig. \ref{fig:win-loss}, after $N$ steps, if the probability
$P_{R}$ of the walker to be found in the right of the origin, is
greater than the probability $P_{L}$ in the left of the origin, that
is $P_{R}-P_{L}>0$, we consider the player win $P_{R}-P_{L}$. Similarly,
if $P_{R}-P_{L}<0$, the player losses $P_{L}-P_{R}$. If $P_{R}-P_{L}=0$,
it means the player neither losses nor wins.

\section{Results}

Here we use $P_{R}-P_{L}=\sum_{x>0}(P(x)-P(-x))$, where $P(x)$ is
the probability of the particle to be found at $x$, but not the average
positon $\left\langle x\right\rangle $ to indicate the player win
or loss, because sometimes $\left\langle x\right\rangle =\sum_{x>0}x(P(x)-P(-x)$)
may be positive, but $P_{R}-P_{L}$ be negtive, and vice versa. So
we can not use $\left\langle x\right\rangle $ to indicate the player
win or loss.
\begin{thm}
\label{thm:pr-pl}If the initail state $\mid\Psi_{0}\rangle=1/\sqrt{2}(\mid0L\rangle+i\mid0R\rangle)$,
after $t$ steps quantum walk: $P_{R}-P_{L}=M(\beta,t)\sin(\alpha+\gamma)$,
where $M(\beta,t)$ only depends on $\beta$ and $t$.\end{thm}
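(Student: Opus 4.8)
The plan is to work in the momentum (Fourier) representation, where the homogeneous walk with coin $U^{S}_{\alpha,\beta,\gamma}$ acts by multiplication, and to strip the phases $\alpha,\gamma$ off by a chain of gauge moves followed by a discrete symmetry of what remains. The only relevant fact about the initial state is that $\mid\Psi_{0}\rangle=\mid0\rangle\otimes\tilde\psi_{0}$ with $\tilde\psi_{0}=\tfrac1{\sqrt2}(1,i)^{T}$ in the coin basis $(\mid L\rangle,\mid R\rangle)$, so its Fourier transform is the constant function $k\mapsto\tilde\psi_{0}$ and $\tilde\Psi_{t}(k)=\bigl(\hat S(k)U^{S}_{\alpha,\beta,\gamma}\bigr)^{t}\tilde\psi_{0}$, with $\hat S(k)=\mathrm{diag}(e^{ik},e^{-ik})$ the momentum-space image of the shift. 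The first step is the gauge factorization $U^{S}_{\alpha,\beta,\gamma}=\Phi_{(\alpha-\gamma)/2}\,R_{\beta}\,\Phi_{(\alpha+\gamma)/2}$, where $\Phi_{\phi}=\mathrm{diag}(e^{i\phi},e^{-i\phi})$ and $R_{\beta}$ is the real rotation by $\beta$ (a one-line check against Eq.~(\ref{eq:SU(2)})); since $\Phi_{\phi}$ is diagonal in the coin basis it commutes with $\hat S$ and satisfies $\Phi_{\phi}\hat S(k)=\hat S(k+\phi)$. Inserting this into $\tilde\Psi_{t}(k)$ and collapsing the internal pairs $\Phi_{(\alpha+\gamma)/2}\Phi_{(\alpha-\gamma)/2}=\Phi_{\alpha}$ gives $\tilde\Psi_{t}(k)=\Phi_{(\alpha-\gamma)/2}\,\hat S(k)R_{\beta}\bigl(\hat S(k+\alpha)R_{\beta}\bigr)^{t-1}\Phi_{(\alpha+\gamma)/2}\tilde\psi_{0}$.

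\emph{Reduction to a $(\beta,t)$-kernel.} Now the leftmost $\Phi_{(\alpha-\gamma)/2}$ is unitary and commutes with the position projection, so it disappears from $P(x)$; and $\Phi_{(\alpha+\gamma)/2}\tilde\psi_{0}$ equals, up to a global phase, $v(\phi):=\tfrac1{\sqrt2}(1,\,ie^{-i\phi})^{T}$ with $\phi=\alpha+\gamma$. The only remaining trace of $\alpha$ is the mismatch between the first shift $\hat S(k)$ and the later $\hat S(k+\alpha)$; I would remove it by the substitution $k\mapsto k-\alpha$ in $\langle x\mid\Psi_{t}\rangle=\int\frac{dk}{2\pi}e^{-ikx}\tilde\Psi_{t}(k)$, which replaces $\hat S(k)R_{\beta}(\hat S(k+\alpha)R_{\beta})^{t-1}$ by $\Phi_{-\alpha}(\hat S(k)R_{\beta})^{t}$ at the cost of a scalar phase $e^{i\alpha x}$ and an extra left factor $\Phi_{-\alpha}$, both invisible in $P(x)$. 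Hence $P(x)=\lVert F(x)v(\phi)\rVert^{2}=v(\phi)^{\dagger}Q(x)v(\phi)$ with $F(x)=\int\frac{dk}{2\pi}e^{-ikx}(\hat S(k)R_{\beta})^{t}$ and $Q(x)=F(x)^{\dagger}F(x)$ depending only on $x,\beta,t$; in particular $P(x)$ depends on the phases only through $\phi=\alpha+\gamma$, and only at first harmonic. Note $F(x)$ is a trigonometric polynomial of degree $t$ in $e^{ik}$, so it is supported on $\lvert x\rvert\le t$ and all sums below are finite.

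\emph{Symmetries and conclusion.} Since $\hat S(k)$ has entries $e^{\pm ik}$ and $R_{\beta}$ is real, $\overline{(\hat S(k)R_{\beta})^{t}}=(\hat S(-k)R_{\beta})^{t}$, and $k\mapsto-k$ in the integral shows $F(x)$ is real, so $Q(x)$ is real symmetric; expanding $v(\phi)^{\dagger}Q(x)v(\phi)$ then gives $P(x)=\tfrac12\operatorname{tr}Q(x)+Q_{12}(x)\sin\phi$, the putative $\cos\phi$-coefficient $\operatorname{Im}Q_{12}(x)$ being zero. Next, $\hat S(-k)=X\hat S(k)X$ and $ZR_{\beta}Z=R_{-\beta}=XR_{\beta}X$ (with $X$ the coin flip and $Z=\mathrm{diag}(1,-1)$) combine into $(\hat S(-k)R_{\beta})^{t}=J(\hat S(k)R_{\beta})^{t}J^{-1}$ for the orthogonal $J=XZ$ (a $90^{\circ}$ rotation), whence $F(-x)=JF(x)J^{-1}$ and $Q(-x)=JQ(x)J^{-1}$; on a real symmetric $2\times2$ matrix this conjugation swaps the diagonal and flips the sign of the off-diagonal, so $\operatorname{tr}Q(-x)=\operatorname{tr}Q(x)$ and $Q_{12}(-x)=-Q_{12}(x)$. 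Therefore $P(x)-P(-x)=2Q_{12}(x)\sin\phi$ and $P_{R}-P_{L}=\sum_{x>0}\bigl(P(x)-P(-x)\bigr)=\bigl(2\sum_{x>0}Q_{12}(x)\bigr)\sin(\alpha+\gamma)$, so $M(\beta,t):=2\sum_{x>0}Q_{12}(x)$ depends only on $\beta$ and $t$, as claimed. I expect the crux to be this last symmetry step: the coin flip alone and the sign flip alone each also send $\beta\mapsto-\beta$, and only their product $J$ relates the walk at $x$ and $-x$ at fixed $\beta$; the other delicate point is keeping the gauge moves mutually consistent, especially the treatment of the isolated first shift and the sign conventions in $\hat S(k)$.
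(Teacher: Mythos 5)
Your argument is correct, but it takes a genuinely different (and much more explicit) route than the paper: the paper's entire proof is the single sentence that the claim follows ``the same as the proof of $\langle x\rangle=G(\beta,t)\sin(\alpha+\gamma)$'' in Ref.~\cite{Li-average}, i.e.\ it defers wholesale to the momentum-space computation of the mean position in that companion paper and asserts the analogy. Your derivation is a self-contained proof, and it supplies exactly the point the paper leaves implicit: after the gauge factorization $U^{S}_{\alpha,\beta,\gamma}=\Phi_{(\alpha-\gamma)/2}R_{\beta}\Phi_{(\alpha+\gamma)/2}$ (which does reproduce Eq.~(\ref{eq:SU(2)})) and the substitution $k\mapsto k-\alpha$, the \emph{pointwise} asymmetry satisfies $P(x)-P(-x)=2Q_{12}(x)\sin(\alpha+\gamma)$ with $Q_{12}$ depending only on $(x,\beta,t)$, so any weighted sum over $x>0$ has the form $M\sin(\alpha+\gamma)$ --- weight $x$ gives the cited $\langle x\rangle$ result, weight $1$ gives the theorem. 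This is worth spelling out precisely because the paper stresses, just above the theorem, that $P_{R}-P_{L}$ and $\langle x\rangle$ can disagree in sign, so ``same as the proof for $\langle x\rangle$'' is not automatic; your symmetry $Q(-x)=JQ(x)J^{-1}$ with $J=XZ$ (rather than $X$ or $Z$ alone, each of which also flips $\beta$) is the lemma that legitimizes the analogy. The individual steps all check out: the realness of $F(x)$ via $k\mapsto-k$, the reduction of $v(\phi)^{\dagger}Q(x)v(\phi)$ to $\tfrac12\operatorname{tr}Q(x)+Q_{12}(x)\sin\phi$, and the swap/negation of the entries of the real symmetric $Q(x)$ under conjugation by $J$. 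The only residual ambiguity is the sign convention in $\hat S(k)$, which you flag yourself; it can at most flip the sign of $M(\beta,t)$ and does not affect the statement, since $M$ is only asserted to exist and to depend on $(\beta,t)$ alone.
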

\begin{proof}
The same as the proof of $\left\langle x\right\rangle =G(\beta,t)\sin(\alpha+\gamma)$
from \cite{Li-average}, we can easily get $P_{R}-P_{L}=M(\beta,t)\sin(\alpha+\gamma)$.
\end{proof}
Theorem \ref{thm:pr-pl} shows $P_{R}-P_{L}=M(\beta,t)\sin(\alpha+\gamma)$,
then if we set $\alpha+\gamma=\pi/2$, we can calculate $M(\beta,t)$
varied with the change of $\beta$. Fig. \ref{fig:0-beta-90} shows
the $P_{R}-P_{L}$ varied with the change of $\beta$, after 100 steps
QWs with the initial state $\mid\Psi_{0}\rangle=1/\sqrt{2}(\mid0L\rangle+i\mid0R\rangle)$,
and the coin operator $U^{S}=U^{S}(0,\beta,90)$ (Here 90 denote 90
degree, in the following, we always use this setting). The figure
inside shows $P_{R}-P_{L}$ varies with the parameter $\beta\in[80,100]$.
From the figure, we can know that when $\beta\approx88,$ $P_{R}-P_{L}$
gets its maximun value. In the following , we set $\beta_{A}=45,\beta_{B}=88$.

\begin{figure}
\begin{centering}
\includegraphics[width=3in]{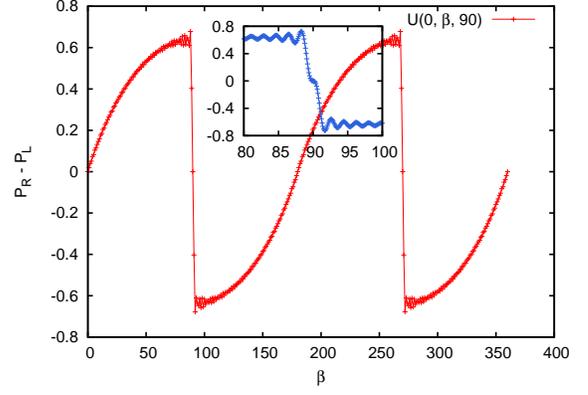}
\par\end{centering}

\caption{(Color online) $P_{R}-P_{L}$ of the walker after $t=100$ steps QWs
for the initial state $\mid\Psi_{0}\rangle=1/\sqrt{2}(\mid0L\rangle+i\mid0R\rangle)$,
and the coin operator $U^{S}=U^{S}(0,\beta,90)$. The figure inside
shows $P_{R}-P_{L}$ vary with parameter $\beta\in[80,100]$.}

\label{fig:0-beta-90}
\end{figure}

Fig. \ref{fig:alpha} shows the $P_{R}-P_{L}$ of the walker after
$100$ steps QWs with $q=3$ (a sequence of games \emph{ABB}), and
$U_{A}^{S}=U^{S}(15,45,30)$, $U_{B}^{S}=U^{S}(\alpha_{B},88,0)$.
From the figure, we can know that $P_{R}-P_{L}$ does not vary withh
the change of $\alpha_{B}$. So in the following, we always calculate
with $\alpha_{B}=0$.

\begin{figure}
\begin{centering}
\includegraphics[width=3in]{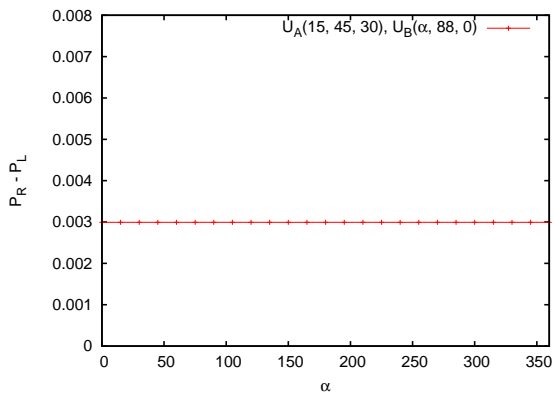}
\par\end{centering}

\caption{(Color online) $P_{R}-P_{L}$ of the walker after $100$ steps QWs
for the initial state $\mid\Psi_{0}\rangle=1/\sqrt{2}(\mid0L\rangle+i\mid0R\rangle)$,
with $q=3$ (for a sequence of games \emph{ABB}), and $U_{A}^{S}=U^{S}(15,45,30)$,
$U_{B}^{S}=U^{S}(\alpha_{B},88,0)$.}

\label{fig:alpha}
\end{figure}

\begin{figure}
\begin{centering}
\includegraphics[width=3in]{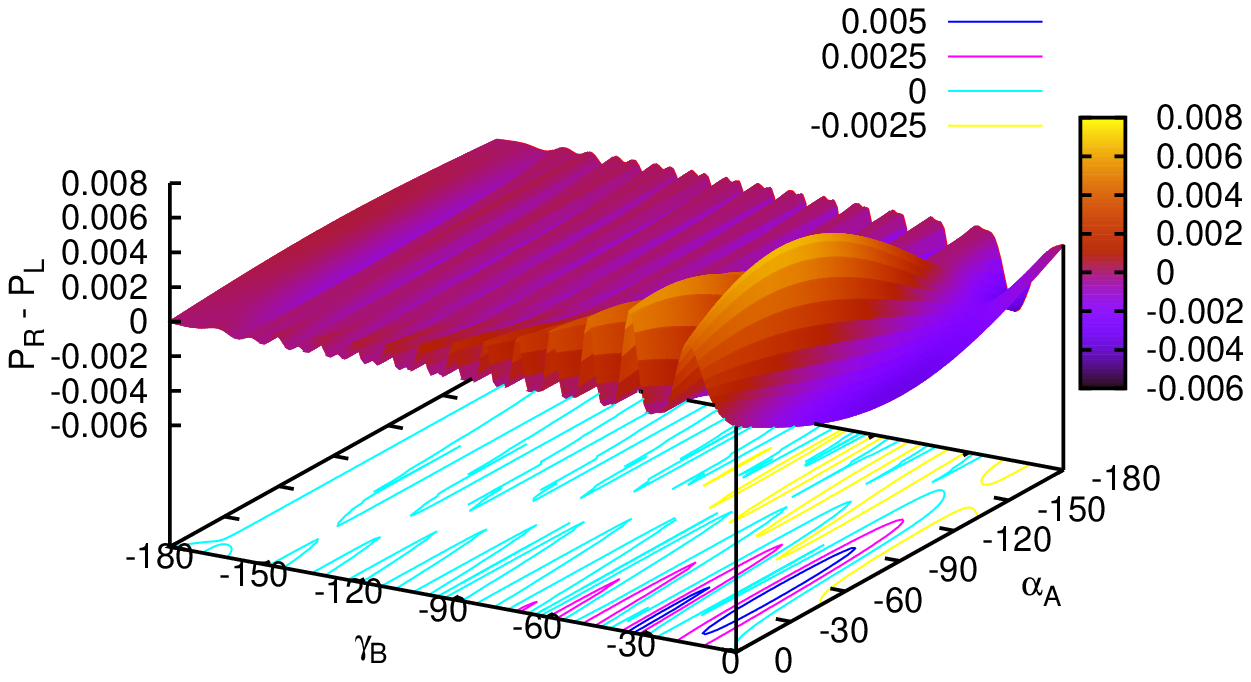}
\par\end{centering}

\begin{centering}
\includegraphics[width=3in]{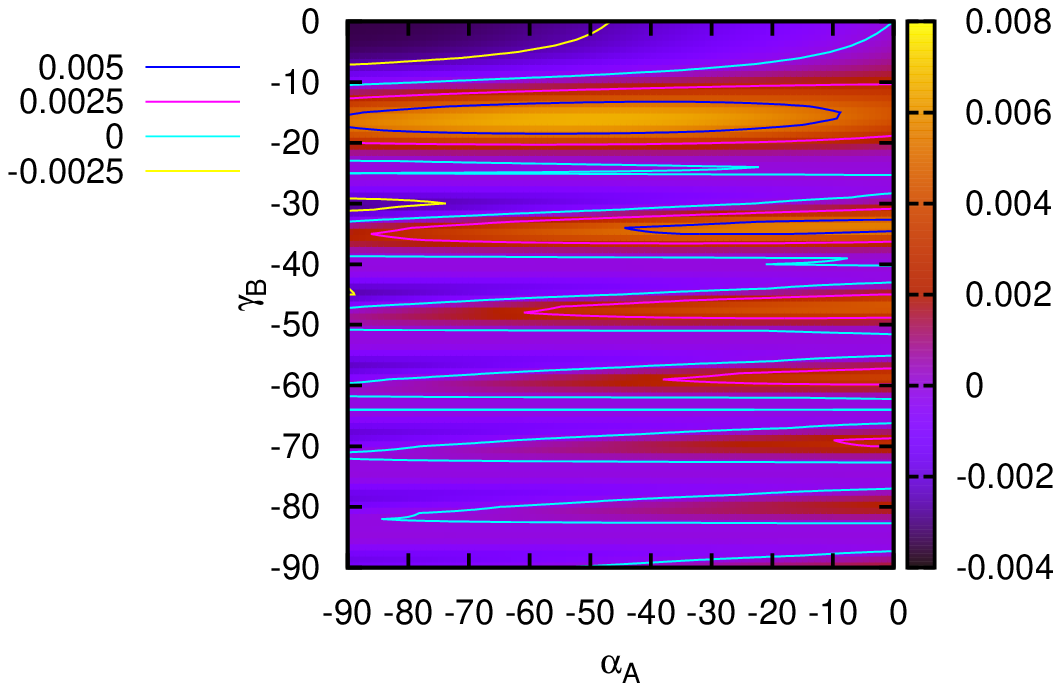}
\par\end{centering}

\caption{(Color online) $P_{R}-P_{L}$ of the walker after $100$ steps QWs
for the initial state $\mid\Psi_{0}\rangle=1/\sqrt{2}(\mid0L\rangle+i\mid0R\rangle)$,
with $q=3$ (for a sequence \emph{ABB}), and $U_{A}^{S}=U^{S}(\alpha_{A},45,0)$,
$U_{B}^{S}=U^{S}(0,88,\gamma_{B})$, where $\alpha_{A},\gamma_{B}\in[-180,0].$
The bottom of the left figure is the contour line of $P_{R}-P_{L}$.
The right figure shows the contour line of $P_{R}-P_{L}$, when $\alpha_{A},\gamma_{B}\in[-90,0].$}

\label{fig-45-88-1}
\end{figure}

\begin{figure}
\begin{centering}
\includegraphics[width=3in]{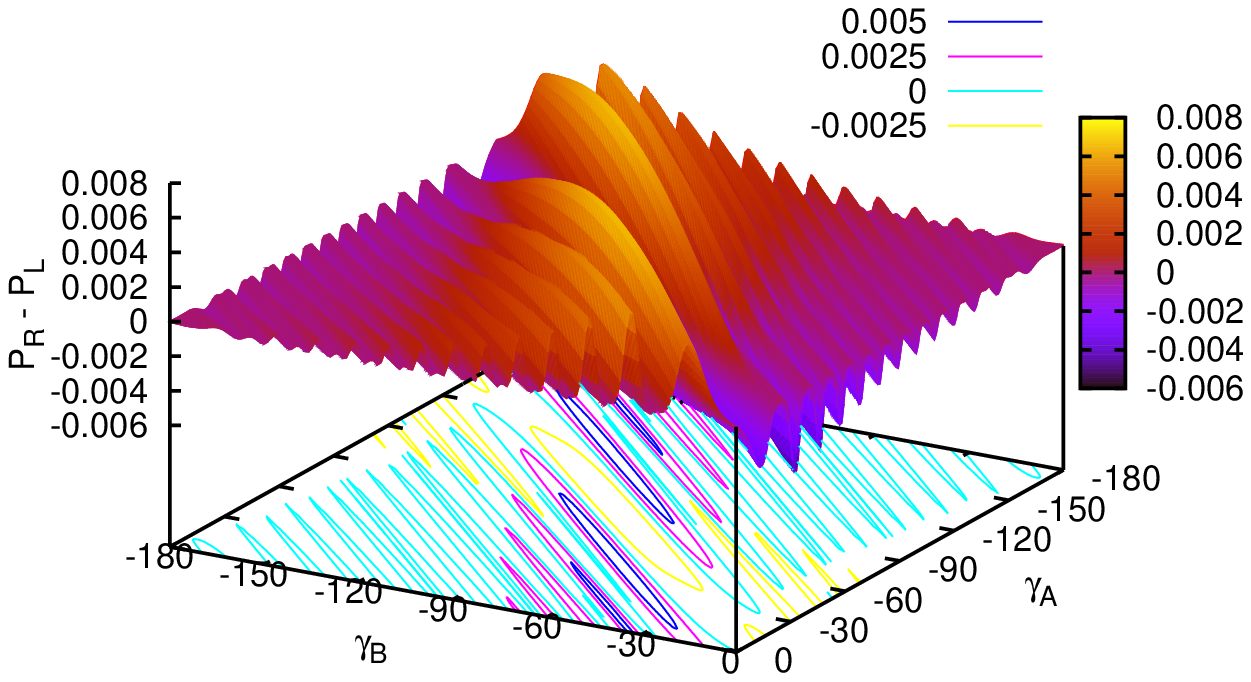}
\par\end{centering}

\begin{centering}
\includegraphics[width=3in]{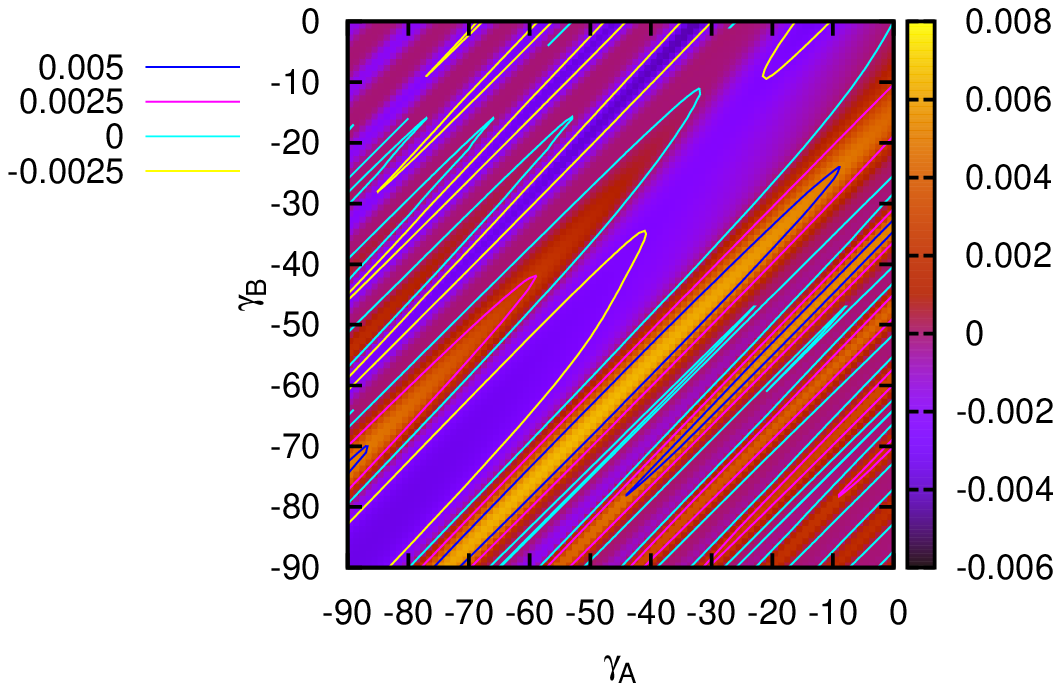}
\par\end{centering}

\caption{(Color online) $P_{R}-P_{L}$ of the walker after $100$ steps QWs
for the initial state $\mid\Psi_{0}\rangle=1/\sqrt{2}(\mid0L\rangle+i\mid0R\rangle)$,
with $q=3$ (for a sequence \emph{ABB}), and $U_{A}^{S}=U^{S}(0,45,\gamma_{A})$,
$U_{B}^{S}=U^{S}(0,88,\gamma_{B})$, where $\gamma_{A},\gamma_{B}\in[-180,0].$
The bottom of the left figure is the contour line of $P_{R}-P_{L}$.
The right figure shows the contour line of $P_{R}-P_{L}$, when $\gamma_{A},\gamma_{B}\in[-90,0].$}

\label{fig-45-88-2}
\end{figure}

Fig. \ref{fig-45-88-1} shows $P_{R}-P_{L}$ of the walker after $100$
steps QWs for the initial state $\mid\Psi_{0}\rangle=1/\sqrt{2}(\mid0L\rangle+i\mid0R\rangle)$,
with $q=3$, and $U_{A}^{S}=U^{S}(\alpha_{A},45,0)$, $U_{B}^{S}=U^{S}(0,88,\gamma_{B})$,
where $\alpha_{A},\gamma_{B}\in[-180,0].$ From Theorem \ref{thm:pr-pl},
we can know that after \emph{$100$} steps QWs with any $\alpha_{A},\gamma_{B}\in[-180,0]$,
game \emph{A} and game \emph{B} are lossing when played seperately,
but Fig. \ref{fig-45-88-1} shows that there are many choice of $\alpha_{A}$
and $\gamma_{B}$ that can win the game (i.e. $P_{R}-P_{L}>0$) with
sequence of games \emph{ABB, }i.e. the Parrondo's paradox arises in
quantum version. The maximun of $P_{R}-P_{L}\approx0.00673$, when
$U_{A}^{S}=U^{S}(-51,45,0)$, $U_{B}^{S}=U^{S}(0,88,-16)$.

The same as above, Fig. \ref{fig-45-88-2} shows $P_{R}-P_{L}$ of
the walker after $100$ steps QWs with a sequence of games \emph{ABB}
and $U_{A}^{S}=U^{S}(0,45,\gamma_{A})$, $U_{B}^{S}=U^{S}(0,88,\gamma_{B})$,
where $\gamma_{A},\gamma_{B}\in[-180,0].$ $\gamma_{A},\gamma_{B}\in[-180,0]$.
Game \emph{A} and game \emph{B} are lossing when played seperately,
but combining them as ABB, we will win with many choices of $\gamma_{A}$
and $\gamma_{B}$. The maximun of $P_{R}-P_{L}\approx0.00673$, when
$U_{A}^{S}=U^{S}(0,45,-51)$, $U_{B}^{S}=U^{S}(0,88,-67)$.

\begin{figure}
\begin{centering}
\includegraphics[width=3in]{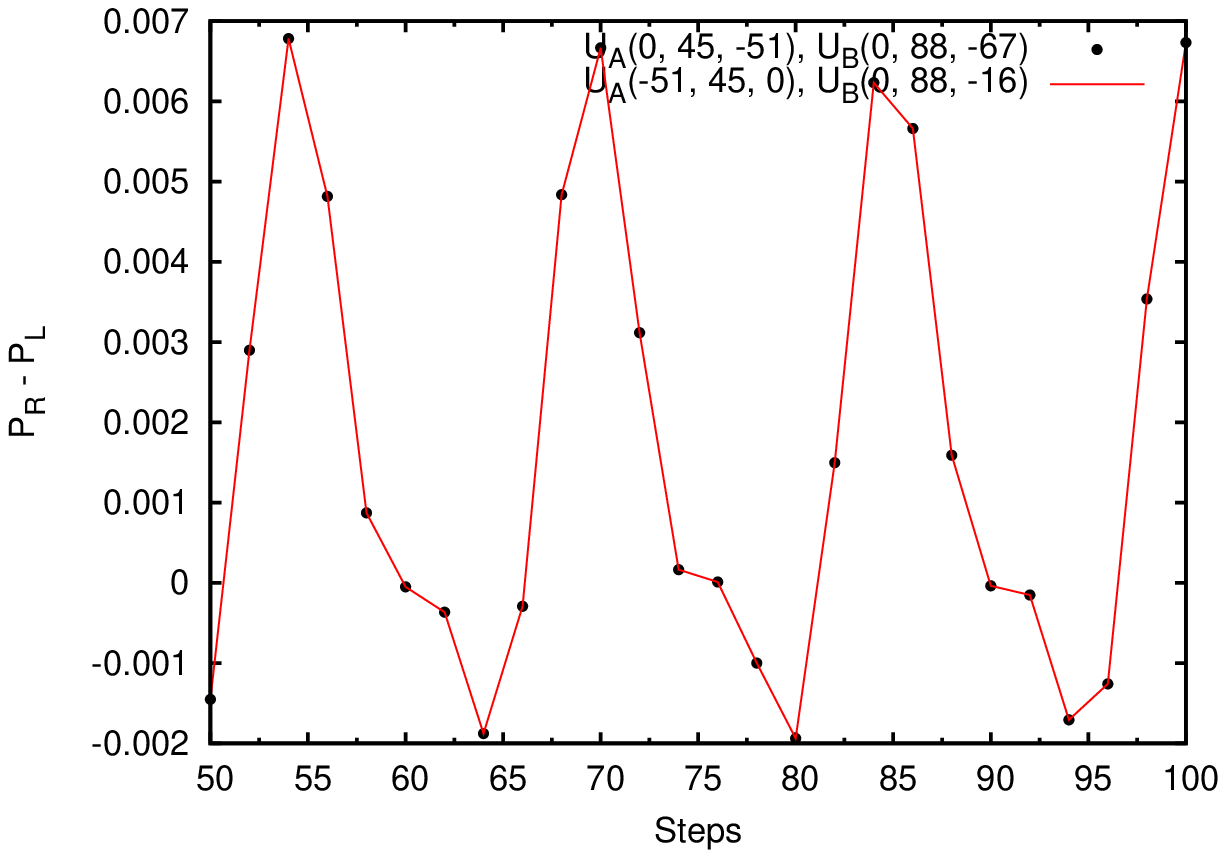}
\par\end{centering}

\begin{centering}
\includegraphics[width=3in]{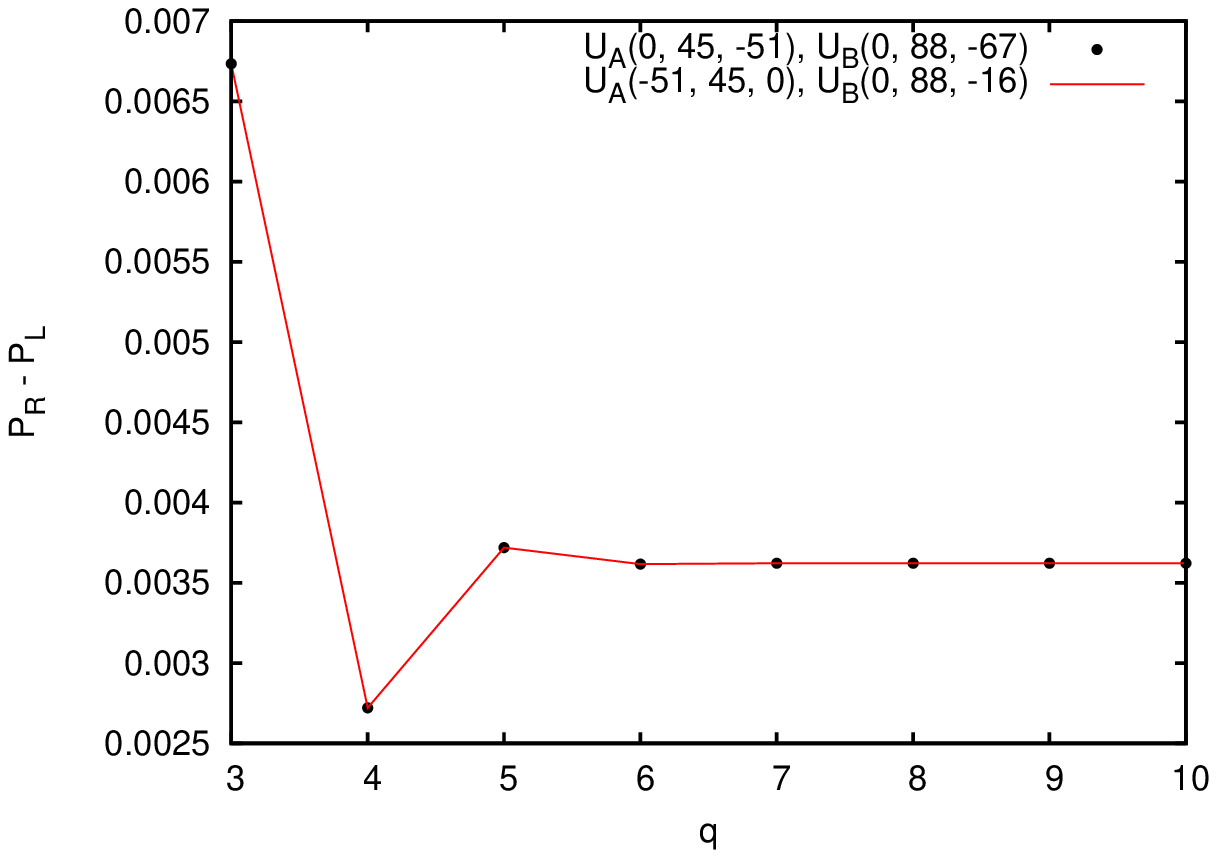}\caption{(Color online) Left figure: $P_{R}-P_{L}$ of the walker after QWs
different steps with $q=3$. Right figure: $P_{R}-P_{L}$ of the walker
after $100$ steps QWs with different $q$ (sequences of games, e.
x. $q=4,$\emph{ ABBB}). $100$ steps with (red line) $U_{A}^{S}=U^{S}(-51,45,0)$
and $U_{B}^{S}=U^{S}(0,88,-16)$ or (black point) $U_{A}^{S}=U^{S}(0,45,-51)$
and $U_{B}^{S}=U^{S}(0,88,-67)$.}

\par\end{centering}

\label{fig:step-period-1}
\end{figure}

Fig. \ref{fig:step-period-1} shows the above two games of the maximun
$P_{R}-P_{L}$, game 1\emph{:} with \emph{$U_{A1}^{S}=U^{S}(-51,45,0)$,
$U_{B1}^{S}=U^{S}(0,88,-16)$; }games 2\emph{ }with\emph{ $U_{A2}^{S}=U^{S}(0,45,-51)$,
$U_{B2}^{S}=U^{S}(0,88,-67)$.} From the left figure, we know that
game 1 and game 2 have the\emph{ s}ame result of win or loss with
different steps. The right figure shows that\emph{ }game 1\emph{ }and\emph{
}game\emph{ }2\emph{ }will win in the same way with the increasing
of $q$. So\emph{ }game 1\emph{ }is equivalent to game 2, then in
the following, we only need to study the game 1.

\begin{figure}
\begin{centering}
\includegraphics[width=3in]{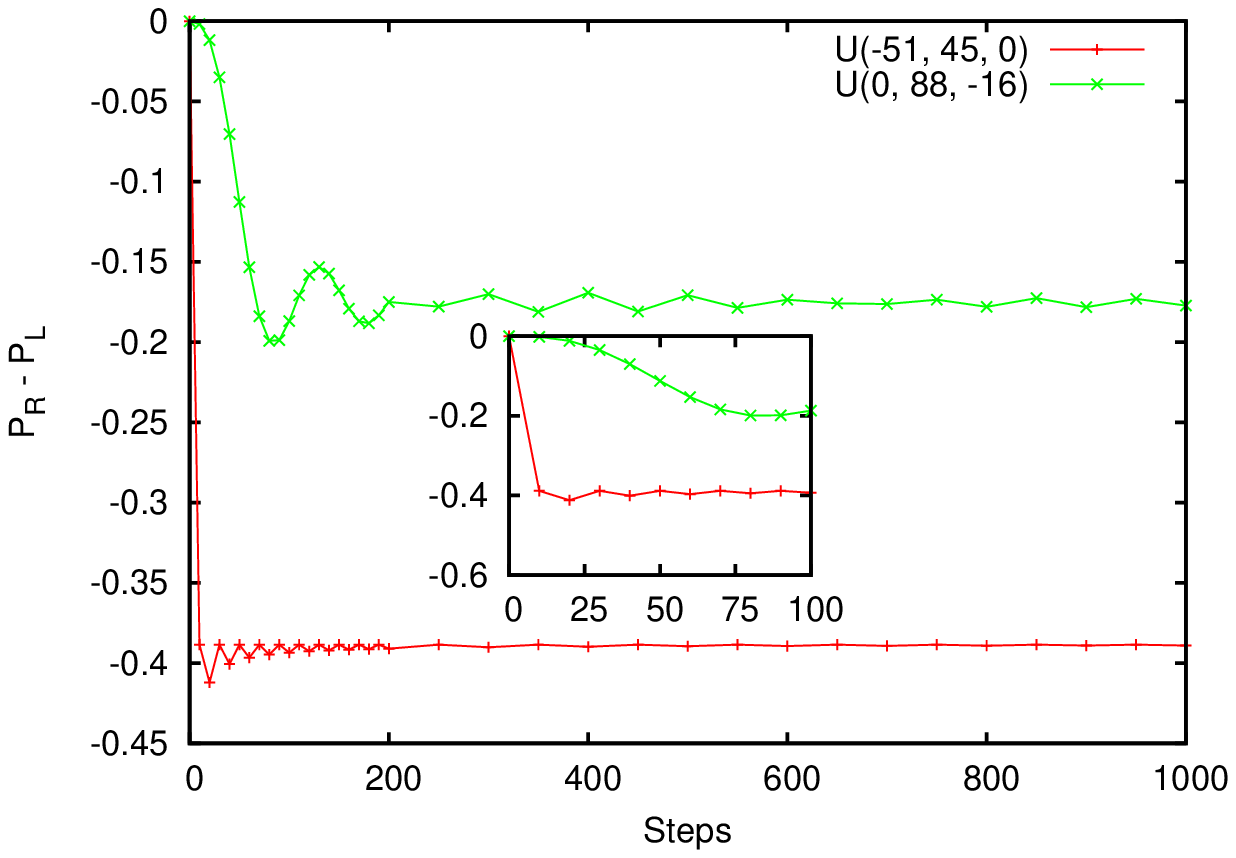}
\par\end{centering}

\caption{(Color online) $P_{R}-P_{L}$ of the walker for QWs after $t$ steps,
with initial state $\mid\Psi_{0}\rangle=1/\sqrt{2}(\mid0L\rangle+i\mid0R\rangle)$,
and coin operator $U^{S}(-51,45,0)$ (red line) or $U^{S}(0,88,-16)$
(green line). }

\label{fig: 51-45 and 88-16}
\end{figure}

\begin{figure}
\begin{centering}
\includegraphics[width=3in]{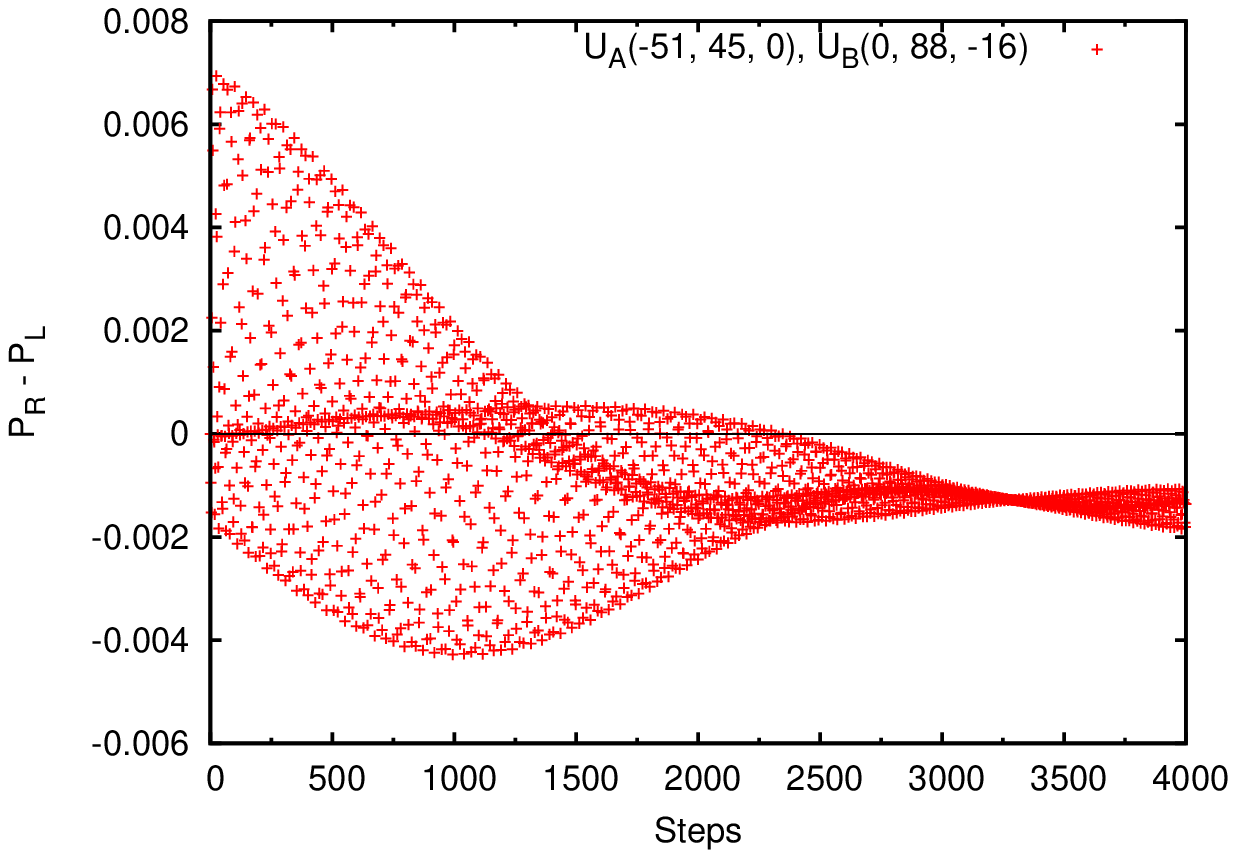}
\par\end{centering}

\caption{(Color online) $P_{R}-P_{L}$ of the walker after QWs different steps
with $q=3$, $U_{A}^{S}=U^{S}(-51,45,0)$, $U_{B}^{S}=U^{S}(0,88,-16)$.
(only even steps)}

\label{fig-45-88-3}
\end{figure}

Next, we want to know for the sequence of game \emph{ABB} whether
there still exists the effect of Parrondo's paradox with the increacing
of steps. First, we need to know games A\emph{ }with  $U_{A}^{S}=U^{S}(-51,45,0)$
and\emph{ }B with $U_{B}^{S}=U^{S}(0,88,-16)$ still loss after different
steps QWs? Fig. \ref{fig: 51-45 and 88-16} shows that the games\emph{
A} and \emph{B} still loss after different steps QWs, and the $P_{R}-P_{L}$
decreases fast and will tend to stable with the increasing of steps.
Second, Fig. \ref{fig-45-88-3} shows the combined game in the situation
of sequence \emph{ABB }situation, with the increasing of steps (only
even steps), the result of the game will always fluctuate, and will
loss with a large enough step.

\section{Conclusion}

In this paper, we have constructed Parrondo's games by using the one-dimensional
discrete time quantum walks. The game is constructed by two lossing
games \emph{A} and \emph{B} with two different biased coin operators
$U_{A}(\alpha_{A},\beta_{A},\gamma_{A})$ and $U_{B}(\alpha_{B},\beta_{B},\gamma_{B})$,
but different from the time dependent sequences of games in \cite{Flitney-2012},
here we consider the position dependent sequences of games. With a
number of selections of $\alpha_{A},\beta_{A},\gamma_{A},\alpha_{B},\beta_{B},\gamma_{B}$,
we can form a winning game with sequences \emph{ABB, ABBB}, \emph{et
al}. If we set $\beta_{A}=45,\gamma_{A}=0,\alpha_{B}=0,\beta_{B}=88$,
we find the \emph{game 1 }with $U_{A}^{S}=U^{S}(-51,45,0)$, $U_{B}^{S}=U^{S}(0,88,-16)$
will win most. If we set $\alpha_{A}=0,\beta_{A}=45,\alpha_{B}=0,\beta_{B}=88$,
the \emph{game 2} with $U_{A}^{S}=U^{S}(0,45,-51)$, $U_{B}^{S}=U^{S}(0,88,-67)$
will win most. And \emph{game 1 }is equivalent to the \emph{game}
\emph{2 }with the changes of sequences and steps. But at a large enough
steps, the game will loss.
\begin{acknowledgments}
This work was supported by the National Natural Science Foundation
of China (Grant No. 10974192, 61275122), the National Fundamental
Research Program of China (Grant No. 2011CB921200, 2011CBA00200),
K. C. Wong Education Foundation and CAS.\end{acknowledgments}

\end{document}